\newcommand{\keywords}[1]{\par\addvspace\baselineskip
\noindent\keywordname\enspace\ignorespaces#1}
\begin{document}

\mainmatter  

\title{Efficient Computation of the Characteristic Polynomial of a Threshold Graph}

\titlerunning{Characteristic Polynomial of a Threshold Graph}

%
%
\author{Martin F\"urer%
\thanks{Research supported in part by NSF Grant CCF-1320814.}}
\authorrunning{Martin F\"urer}

\institute{Department of Computer Science and Engineering \\
	Pennsylvania State University \\
	University Park, PA 16802,  USA \\
	furer@cse.psu.edu \\
\url{http://www.cse.psu.edu/~furer}}

%
%

\maketitle

\begin{abstract}
An efficient algorithm is presented to compute the characteristic polynomial of a threshold graph. Threshold graphs were introduced by Chv\'atal and Hammer, as well as by Henderson and Zalcstein in 1977. A threshold graph is obtained from a one vertex graph by repeatedly adding either an isolated vertex or a dominating vertex, which is a vertex adjacent to all the other vertices. Threshold graphs are special kinds of cographs, which themselves are special kinds of graphs of clique-width 2. We obtain a running time of $O(n \log^2 n)$ for computing the characteristic polynomial, while the previously fastest algorithm ran in quadratic time.
\keywords{Efficient Algorithms, Threshold Graphs, Characteristic Polynomial}
\end{abstract}

\section{Introduction}

The characteristic polynomial of a graph $G=(V,E)$ is defined as the characteristic polynomial of its adjacency matrix $A$, i.e. $\chi(G, \lambda) 
 = \det(\lambda I - A)$. The characteristic polynomial is a graph invariant, i.e., it does not depend on the enumeration of the vertices of $G$. The complexity of computing the characteristic polynomial of a matrix is the same as that of matrix multiplication \cite{Keller-Gehrig85,Pernet2007} (see \cite[Chap.16]{BurgisserCS97}), currently $O(n^{2.376})$ \cite{CoppersmithW90}. For special classes of graphs, we expect to find faster algorithms for the characteristic polynomial. Indeed, for trees, a chain of improvements \cite{TinhoferS85,Mohar89} resulted in an $O(n \log^2 n)$ time algorithm \cite{Furer2014}. The determinant and rank of the adjacency matrix of a tree can even be computed in linear time \cite{FrickeHJT96}. For Threshold graphs (defined below), Jacobs et al.\ \cite{JacobsTT2014} have designed an $O(n^2)$ time algorithm to compute the characteristic polynomial. Here, we improve the running time to $O(n \log^2 n)$. As usual, we use the algebraic complexity measure, where every arithmetic operation counts as one step. Throughout this paper, $n=|V|$ is the number of vertices of $G$.

Threshold graphs \cite{ChvatalH77,HendersonZ77} are defined as follows. Given $n$ and a sequence $b=(b_1,\dots, b_{n-1}) \in \{0,1\}^{n-1}$, the threshold graph $G_b = (V, E)$ is defined by $V=[n]=\{1,\dots,n\}$, and for all $i<j$, $\{i,j\} \in E$ iff $b_i = 1$. Thus $G_b$ is constructed by an iterative process starting with the initially isolated vertex $n$. In step $j>1$, vertex $n-j+1$ is added. At this time, vertex $j$ is isolated if $b_j$ is 0, and vertex $j$ is adjacent to all other (already constructed) vertices $\{j+1,\dots,n\}$ if $b_j=1$. It follows immediately that $G_b$ is isomorphic to $G_{b'}$ iff $b=b'$. $G_b$ is connected if $b_1 = 1$, otherwise vertex $1$ is isolated. Usually, the order of the vertices being added is $1,2, \dots ,n$ instead of $n,n-1, \dots, 1$. We choose this unconventional order to simplify our main algorithm.

Threshold graphs have been widely studied and have several applications from combinatorics to computer science and psychology \cite{MahadevP95}.

In the next section, we study determinants of weighted threshold graph matrices, a class of matrices containing adjacency matrices of threshold graphs. In Section 3, we design the efficient algorithm to compute the characteristic polynomial of threshold graphs. We also look at its bit complexity in Section 4, and finish with open problems.

\section{The determinant of a weighted threshold graph matrix}

We are concerned with adjacency matrices of threshold graphs, but we consider a slightly more general class of matrices. We call them weighted threshold graph matrices.
Let $M_{b_1 b_2 \dots b_{n-1}}^{d_1 d_2 \dots d_n}$ be the matrix with the following entries.
\[ {\left(M_{b_1 b_2 \dots b_{n-1}}^{d_1 d_2 \dots d_n}\right)}_{ij} =
\begin{cases}
b_i & \mbox{if $i<j$} \\
b_j & \mbox{if $j<i$} \\
d_i & \mbox{if $i=j$}
\end{cases}
 \]
 Thus, the weighted threshold matrix for $(b_1 b_2 \dots b_{n-1};d_1 d_2 \dots d_n)$ looks like this.
 \[M_{b_1 b_2 \dots b_{n-1}}^{d_1 d_2 \dots d_n} =
\begin{pmatrix}
 d_1 & b_1 & b_1 & \dots & b_1 & b_1 \\
 b_1 & d_2 & b_2 & \dots & b_2 & b_2 \\
 b_1 & b_2 & d_3 & \dots & b_3 & b_3 \\
 \vdots & \vdots & \vdots & \ddots & \vdots & \vdots \\
 b_1 & b_2 & b_3 & \dots & d_{n-1} & b_{n-1} \\
 b_1 & b_2 & b_3 & \dots & b_{n-1} & d_n \\
\end{pmatrix}
 \]
 
 In order to compute the determinant of $M_{b_1 b_2 \dots b_{n-1}}^{d_1 d_2 \dots d_n}$, we subtract the penultimate row from the last row and the penultimate column from the last column. In other words, we do a similarity transform with the following regular matrix
 \[ P=
\begin{pmatrix}
 1 & 0  & 0 & \dots & 0 & 0 \\
 0 & 1 & 0 & \dots & 0 & 0 \\
0 & 0 & 1 & \dots & 0 & 0 \\
 \vdots & \vdots & \vdots & \ddots & \vdots & \vdots \\
0 & 0 & 0 & \dots & 1 & -1 \\
0 & 0 & 0 & \dots & 0 & 1 \\
\end{pmatrix},
 \]
i.e.,
 \[ P_{ij} =
\begin{cases}
1 & \mbox{if $i=j$} \\
-1 & \mbox{if $i=n$ and $j=n-1$} \\
0 & \mbox{otherwise}.
\end{cases}
 \]
 The row and column operations applied to $M_{b_1 b_2 \dots b_{n-1}}^{d_1 d_2 \dots d_n}$ produce the similar matrix
 \[P^T \, M_{b_1 b_2 \dots b_{n-1}}^{d_1 d_2 \dots d_n} \, P =
 \begin{pmatrix}
 d_1 & b_1 & b_1 & \dots & b_1 & 0 \\
 b_1 & d_2 & b_2 & \dots & b_2 & 0 \\
 b_1 & b_2 & d_3 & \dots & b_3 & 0 \\
 \vdots & \vdots & \vdots & \ddots & \vdots & \vdots \\
 b_1 & b_2 & b_3 & \dots & d_{n-1} & b_{n-1}-d_{n-1} \\
 0 & 0 & 0 & \dots & b_{n-1}-d_{n-1} \, & \, d_n + d_{n-1} - 2b_{n-1} \\
\end{pmatrix}
 \]
 Naturally, the determinant of $P$ is 1, implying 
 \[\det \left( P^T \, M_{b_1 b_2 \dots b_{n-1}}^{d_1 d_2 \dots d_n} \, P \right)
 = \det \left( M_{b_1 b_2 \dots b_{n-1}}^{d_1 d_2 \dots d_n} \right). 
 \]
 Furthermore, we observe that $P^T \, M_{b_1 b_2 \dots b_{n-1}}^{d_1 d_2 \dots d_n} \, P$ has a very nice pattern.
 \[ P^T \, M_{b_1 b_2 \dots b_{n-1}}^{d_1 d_2 \dots d_n} \, P =
 \left( \,
 \begin{array}{cccccc}
 \cline{1-5}
\multicolumn{5}{|c|}{M_{b_1 b_2 \dots b_{n-2}}^{d_1 d_2 \dots d_{n-1}}} & 
\begin{array}{c}
b_1 \\ b_2 \\  b_3 \\ \vdots \\ b_{n-1}-d_{n-1}
\end{array} \\
 \cline{1-5} 
 \rule{0mm}{4mm} b_1 & b_2 & b_3 & \dots & b_{n-1}-d_{n-1} \, & \, d_n + d_{n-1} - 2b_{n-1} \\
\end{array}
\right)
 \]
To further compute the determinant of $P^T \, M_{b_1 b_2 \dots b_{n-1}}^{d_1 d_2 \dots d_n} \, P$,
we use Laplacian expansion by minors applied to the last row. 
\begin{eqnarray*}
\lefteqn{\det \left( M_{b_1 b_2 \dots b_{n-1}}^{d_1 d_2 \dots d_n} \right)
	=  \det \left(P^T M_{b_1 b_2 \dots b_{n-1}}^{d_1 d_2 \dots d_n} P \right)} \\
	& = & (d_n + d_{n-1} - 2 b_{n-1}) \det \left( M_{b_1 b_2 \dots b_{n-2}}^{d_1 d_2 \dots d_{n-1}} \right)
		- (b_{n-1} - d_{n-1})^2 \det \left( M_{b_1 b_2 \dots b_{n-3}}^{d_1 d_2 \dots d_{n-2}} \right) \\
\end{eqnarray*}
By defining the determinant of the $0 \times 0$ matrix $M_{b_1 b_2 \dots b_{n-1}}^{d_1 d_2 \dots d_n}$ with $n=0$ to be 1, and checking the determinants for $n=1$ and $n=2$ directly, we obtain the following result.
\begin{theorem} \label{thm:rec}
 $D_n = \det \left( M_{b_1 b_2 \dots b_{n-1}}^{d_1 d_2 \dots d_n}\right) $ is determined by the recurrence equation
 \[D_n =
 \begin{cases}
1 & \mbox{if $n=0$} \\
d_1 & \mbox{if $n=1$} \\
(d_n + d_{n-1} - 2 b_{n-1}) D_{n-1} - (b_{n-1} - d_{n-1})^2 D_{n-2} & \mbox{if $n \geq 2$}
\end{cases} 
\]\qed
\end{theorem}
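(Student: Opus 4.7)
The plan is to verify the three cases of the recurrence separately, since the theorem essentially packages the derivation carried out in the preceding paragraphs. For $n=0$ the identity $D_0=1$ is the stated convention on the empty determinant, and for $n=1$ the matrix is the $1 \times 1$ matrix $(d_1)$, so $D_1 = d_1$ by inspection. Before moving on I would also record the direct calculation $D_2 = d_1 d_2 - b_1^2$, and check that the recurrence reproduces it: $(d_2 + d_1 - 2b_1) d_1 - (b_1 - d_1)^2 = d_1 d_2 - b_1^2$, which simultaneously confirms the convention $D_0=1$.

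For the inductive case $n \geq 2$, I would follow the derivation sketched in the excerpt. Since $\det P = 1$, the similarity transform $P^T M_{b_1 \dots b_{n-1}}^{d_1 \dots d_n} P$ preserves the determinant. A short verification shows that the two row/column subtractions produce the displayed block matrix: entries $(n,1), \dots, (n, n-2)$ and $(1,n), \dots, (n-2, n)$ all vanish, the entries $(n, n-1)$ and $(n-1, n)$ equal $b_{n-1} - d_{n-1}$, the diagonal entry $(n,n)$ equals $d_n + d_{n-1} - 2 b_{n-1}$, and the upper-left $(n-1) \times (n-1)$ block is exactly $M_{b_1 \dots b_{n-2}}^{d_1 \dots d_{n-1}}$. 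Applying Laplace expansion to the last row of the transformed matrix, only two cofactors contribute. The $(n,n)$ cofactor immediately gives $(d_n + d_{n-1} - 2 b_{n-1}) D_{n-1}$. The $(n, n-1)$ cofactor is obtained by deleting row $n$ and column $n-1$; the resulting $(n-1)\times(n-1)$ minor has its final column equal to $(0, \dots, 0, b_{n-1} - d_{n-1})^T$ and its upper-left $(n-2)\times(n-2)$ block equal to $M_{b_1 \dots b_{n-3}}^{d_1 \dots d_{n-2}}$, so expanding this minor along its last column extracts another factor of $b_{n-1} - d_{n-1}$ and leaves $D_{n-2}$. Combined with the sign $(-1)^{n+(n-1)} = -1$ this gives the second term $-(b_{n-1} - d_{n-1})^2 D_{n-2}$.

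I do not anticipate any genuine obstacle, since the transform trivializes the last row and column and the remaining work is just two Laplace expansions. The most error-prone step is bookkeeping: checking that the zeros in the transformed last row are really where they are claimed (so that only the two final cofactors survive), identifying the correct sub-$M$ in the $(n,n-1)$ minor, and getting the sign on the cross term right. Once that is in order, the recurrence and its base cases together give the claim.
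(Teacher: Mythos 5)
Your proposal is correct and follows the paper's own argument essentially verbatim: the similarity transform by $P$ (subtracting the penultimate row and column from the last), Laplace expansion along the trivialized last row, and a second expansion of the $(n,n-1)$ minor along its last column, with the base cases checked directly. The sign and bookkeeping details you flag all work out exactly as you describe.
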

This has an immediate implication, as we assume every arithmetic operation takes only 1 step.
\begin{corollary}
 The determinant of an $n \times n$ weighted threshold graph matrix can be computed in time $O(n)$.
\end{corollary}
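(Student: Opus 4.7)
The plan is to apply Theorem~\ref{thm:rec} directly in a bottom-up dynamic programming fashion. The recurrence expresses $D_n$ as a fixed arithmetic combination of $D_{n-1}$ and $D_{n-2}$ together with the scalar inputs $d_n, d_{n-1}, b_{n-1}$, so under the algebraic complexity model each invocation of the recurrence uses a constant number of additions, subtractions, multiplications, and one squaring. Hence the per-step cost is $O(1)$.

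Concretely, I would initialize two variables storing $D_0 = 1$ and $D_1 = d_1$, then iterate $k$ from $2$ up to $n$, at each step overwriting the older of the two stored values with
\[
D_k \;=\; (d_k + d_{k-1} - 2 b_{k-1})\, D_{k-1} \;-\; (b_{k-1} - d_{k-1})^2 \, D_{k-2}.
\]
After $n-1$ such updates, the currently ``newest'' variable holds $D_n$. The total number of arithmetic operations is a constant multiple of $n-1$, giving an overall running time of $O(n)$; the working memory is $O(1)$ beyond the input itself.

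There is essentially no genuine obstacle once Theorem~\ref{thm:rec} is in hand: the theorem already supplies a two-term linear recurrence with coefficients computable in constant time from the input, and the corollary is simply the observation that such a recurrence unrolls in linear time. The only point worth noting is that we are not hiding any work in accessing $b_k$ or $d_k$, but since these are supplied as part of the description of the weighted threshold graph matrix, each lookup is $O(1)$ in the algebraic model used throughout the paper.
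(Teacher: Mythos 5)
Your proposal is correct and matches the paper's proof, which likewise observes that each step of the recurrence from Theorem~\ref{thm:rec} costs a constant number of arithmetic operations, so unrolling it takes $O(n)$ time. Your version merely spells out the two-variable iteration explicitly; there is no substantive difference.
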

\begin{proof}
 Every step of the recurrence takes a constant number of arithmetic operations. 
\end{proof} \qed

For arbitrary matrices, the tasks of computing matrix products, matrix inverses, and determinants are all equivalent \cite[Chap.16]{BurgisserCS97}, currently $O(n^{2.376})$ \cite{CoppersmithW90}. For weighted threshold graph matrices, they all seem to be different. We have just seen that the determinant can be computed in linear time, which is optimal, as this time is already needed to read the input. The same lower bound holds for computing the characteristic polynomial, and we will show an $O(n \log^2 n)$ algorithm. It is not hard to see that the multiplication of weighted threshold graph matrices can be done in quadratic time. This is again optimal, because the product is no longer a threshold graph matrix, and its output requires quadratic time.

\section{Computation of the Characteristic Polynomial of a Threshold Graph}
 
 The adjacency matrix $A$ of the $n$-vertex threshold graph $G$ defined by the sequence $(b_1, \dots , b_{n-1})$ is the matrix 
 $M_{b_1 b_2 \dots b_{n-1}}^{0 \, 0 \dots 0}$, and the characteristic polynomial of this threshold graph is
 \[ \chi(G,\lambda) = \det(\lambda I - A) = \det \left( M_{-b_1 -b_2 \dots -b_{n-1}}^{\lambda \, \lambda \dots \lambda}\right). \]
 This immediately implies that any value of the characteristic polynomial can be computed in linear time.
 
 The characteristic polynomial itself can be computed by the recurrence equation of Theorem~\ref{thm:rec}. Here all $d_i = \lambda$, and $D_n$, as the characteristic polynomial of an $n$-vertex graph, obviously is a polynomial of degree $n$ in $\lambda$.  Now, the computation of $D_n$ from $D_{n-1}$ and $D_{n-2}$ according to the recurrence equation is a multiplication of polynomials. It takes time $O(n)$, as one factor is always of constant degree. The resulting total time is quadratic. The same quadratic time is achieved, when we compute the characteristic polynomial $\chi(G,\lambda)$ for $n$ different values of $\lambda$ and interpolate to obtain the polynomial $\chi(G,\lambda)$.
 
 We want to do better. Therefore, we write the recurrence equation of Theorem~\ref{thm:rec} in matrix form.
 \[
\begin{pmatrix}
 D_n \\
 D_{n-1} \\
\end{pmatrix}
=
\begin{pmatrix}
 d_n + d_{n-1} - 2 b_{n-1} \,\, & \,\, -(b_{n-1} - d_{n-1})^2 \\
 1 & 0 \\
\end{pmatrix}
 \begin{pmatrix}
 D_{n-1} \\
 D_{n-2} \\
\end{pmatrix}
 \]
 Noticing that $D_0 =1$ and $D_1 = \lambda$, and all $d_i = \lambda$, we obtain the following matrix recurrence immediately.
\begin{theorem} \label{thm:Dn}
 For 
 \[ B_n = \begin{pmatrix}
 2(\lambda - b_{n}) \,\, & \,\, -(b_{n} - \lambda)^2 \\
 1 & 0 \\
\end{pmatrix},
\]
we have 
 \[
\begin{pmatrix}
 D_n \\
 D_{n-1} \\
\end{pmatrix}
=
B_{n-1} B_{n-2} \cdots B_1
 \begin{pmatrix}
\lambda \\
1 \\
\end{pmatrix}
 \] \qed
\end{theorem}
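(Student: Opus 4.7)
The plan is to derive the matrix recurrence directly from Theorem~\ref{thm:rec} by specializing to $d_i = \lambda$ for all $i$, and then iterate.

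First, I would substitute $d_i = \lambda$ into the three-term scalar recurrence of Theorem~\ref{thm:rec}. The coefficient $d_n + d_{n-1} - 2b_{n-1}$ simplifies to $2(\lambda - b_{n-1})$, and the coefficient $-(b_{n-1} - d_{n-1})^2$ becomes $-(b_{n-1} - \lambda)^2$. Thus, for $n \geq 2$,
\[
D_n = 2(\lambda - b_{n-1}) D_{n-1} - (b_{n-1} - \lambda)^2 D_{n-2}.
\]

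Second, I would recognize the companion matrix of this two-term linear recurrence and verify the one-step identity
\[
\begin{pmatrix} D_n \\ D_{n-1} \end{pmatrix}
= B_{n-1} \begin{pmatrix} D_{n-1} \\ D_{n-2} \end{pmatrix},
\]
which is immediate from the definition of $B_{n-1}$ and the simplified recurrence above. Note the index shift: the top row of $B_{n-1}$ carries $b_{n-1}$, matching the coefficients appearing when we produce $D_n$.

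Third, I would proceed by induction on $n$. For the base case $n=1$, observe that the empty product of $B_i$'s is the identity, and the vector $(\lambda, 1)^T$ is precisely $(D_1, D_0)^T$ since $D_1 = d_1 = \lambda$ and $D_0 = 1$ by definition. For $n \geq 2$, the inductive hypothesis gives $(D_{n-1}, D_{n-2})^T = B_{n-2} \cdots B_1 (\lambda, 1)^T$, and multiplying on the left by $B_{n-1}$ using the one-step identity yields the claim.

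There is no real obstacle; the statement is essentially a repackaging of the already-proved Theorem~\ref{thm:rec}. The only point that deserves care is the index bookkeeping in $B_{n-1}$ versus the $b_{n-1}$ appearing in the recurrence, and checking that the base vector $(\lambda, 1)^T$ correctly encodes $D_1$ and $D_0$.
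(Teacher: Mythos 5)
Your proposal is correct and follows essentially the same route as the paper: specializing $d_i = \lambda$ in Theorem~\ref{thm:rec}, writing the resulting two-term recurrence via its companion matrix, and unrolling down to the base vector $(D_1, D_0)^T = (\lambda, 1)^T$. The only difference is that you spell out the induction and the index bookkeeping explicitly, which the paper leaves as ``immediately.''
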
 
This results in a much faster way to compute the characteristic polynomial $\chi(G,\lambda)$.
\begin{corollary}
The characteristic polynomial $\chi(G,\lambda)$ of a threshold graph $G$ with $n$ vertices can be computed in time $O(n \log^2 n)$.
\end{corollary}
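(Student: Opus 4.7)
The plan is to compute the matrix product $B_{n-1} B_{n-2} \cdots B_1$ from Theorem~\ref{thm:Dn} by balanced divide and conquer, using FFT-based polynomial multiplication at each merge. Each $B_i$ is a fixed $2 \times 2$ matrix whose entries are polynomials in $\lambda$ of degree at most $2$, so by associativity of matrix multiplication a product of $k$ consecutive factors is a $2 \times 2$ matrix of polynomials of degree $O(k)$ in $\lambda$.

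I would arrange the $n-1$ factors as the leaves of a complete binary tree and, at each internal node, form the matrix product of the two children. The key subroutine is multiplying two $2 \times 2$ matrices whose polynomial entries have degree at most $d$: this is a constant number of polynomial multiplications and additions, hence can be done in $O(d \log d)$ time via FFT. After the root product is computed, one additional multiplication by the vector $(\lambda, 1)^T$, at cost $O(n \log n)$, yields the top entry $D_n = \chi(G, \lambda)$.

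For the running-time analysis I would proceed level by level. At level $j$ above the leaves, the tree has $O(n / 2^j)$ nodes, and each such node multiplies two $2 \times 2$ matrices whose polynomial entries have degree $O(2^j)$, at cost $O(2^j \cdot j)$ per node using FFT. The total work at level $j$ is therefore $O(n j)$, and summing over $j = 1, \dots, \lceil \log n \rceil$ gives $O(n \log^2 n)$ overall.

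The main point to get right is the leveled cost accounting: polynomial degrees double as one moves up the tree, so a naive argument that charged every merge at the rate of a top-level multiplication would give a quadratic bound. The reason the correct figure is $O(n \log^2 n)$ is that the number of nodes at level $j$ halves exactly as the polynomial degree doubles, leaving only an extra factor of $j$ from FFT per level; summing those factors over the $\log n$ levels of the tree produces the $\log^2 n$ rather than a higher power. Apart from this accounting, the construction is a direct application of divide and conquer to the matrix recurrence of Theorem~\ref{thm:Dn}.
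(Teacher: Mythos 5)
Your proposal is correct and is essentially identical to the paper's proof: the paper also computes $B_{n-1}\cdots B_1$ by balanced pairwise multiplication in $\log n$ rounds (your binary tree, level by level), bounds the entry degrees at each level by $O(2^j)$, and charges $O(nj)$ per level via FFT to get $O(n\log^2 n)$. No substantive differences.
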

\begin{proof}
 For  every $i$, all the entries in the $2\times 2$ matrix $B_i$ are polynomials in $\lambda$ of degree at most 2. Therefore, products of any $k$ such factors have entries which are polynomials of degree at most $2k$. To be more precise, actually the degree bound is $k$, because by induction on $k$, one can easily see that the degrees of the $i,j$-entry of such a matrix is at most
 \begin{eqnarray*}
k & \mbox{for $i=1$ and $j=1$,} \\
k+1 & \mbox{for $i=1$ and $j=2$,} \\
k-1 & \mbox{for $i=2$ and $j=1$,} \\
k & \mbox{for $i=2$ and $j=2$,} \\
\end{eqnarray*}
But the bound of $2k$ is sufficient for our purposes. W.l.o.g., we may assume that $n-1$ (the number of factors) is a power of 2. Otherwise, we could fill up with unit matrices.
Now the product $B_{n-1} B_{n-2} \cdots B_1$ is computed in $\log (n-1)$ rounds of pairwise multiplication to reduce the number of factors by two each time. In the $r$th round, we have $n 2^{-r}$ pairs of matrices with entries of degree at most $2^r$, requiring $O(n2^{-r})$ multiplications of polynomials of degree at most $2^r$. With FFT (Fast Fourier Transform) this can be done in time $O(n r)$. Summing over all rounds $r$ results in a running time of $O(n \log^2 n)$.
\end{proof} \qed

Omitting the simplification of $d_i = \lambda$  in Theorem \ref{thm:Dn}, we see immediately, that also the characteristic polynomial of a weighted threshold graph matrix can be computed in the same asymptotic time of $O(n \log^2 n)$.

\section{Complexity in the Bit Model}
By definition, the characteristic polynomial of an $n$-vertex graph can be viewed as a sum of $n!$ monomials with coefficients form $\{-1,0,1\}$. Thus all coefficients of the characteristic polynomial have absolute value at most $n!$, and can therefore be represented by binary numbers of length $O(n \log n)$. The coefficients can indeed be so big. An example is the constant term in the characteristic polynomial of the clique $K_n$. Its absolute value is the number of derangements (permutations without fixed points), which asymptotically converges to $n!/e$.

With such long coefficients, the usual assumption of arithmetic operations in linear time is actually unrealistic for large $n$. Therefore, the bit model might be more useful. We can use the Turing machine time, because our algorithm is sufficiently uniform. No Boolean circuit is known to compute such things with asymptotically fewer operations than the number of steps of a Turing machine.

We can use the fast $m \log m 2^{O(\log^* m)}$ integer multiplication algorithm~\cite{Furer09} (where $m$ is the length of the factors) to compute 
the FFT for the polynomials. A direct implementation, just using fast integer multiplication everywhere, results in time 
\[(n r) (2^r r^2  2^{O(\log^* r)}) = 
n 2^r r^3 2^{O(\log^* r)}\] 
for the $r$th round where $O(n 2^{-r})$ pairs of polynomials of degree $O(2^r)$ are multiplied. The coefficients of these polynomials have length $O(2^r r)$.
As the coefficients and the degrees of the polynomials increase at least geometrically, only the last round with $r = \log n$ counts asymptotically. The resulting time bound is $O(n^2 \log^3 n 2^{O(\log^* n)})$.
Using Sch\"onhage's \cite{Schonhage1982} idea of encoding numerical polynomials into integers in order to do polynomial multiplication, a speed-up is possible. Again only the last round matters. Here a constant number of polynomials of degree $O(n)$ with coefficients of length $O(n \log n)$ are multiplied. For this purpose, each polynomial is encoded into a number of length $O(n^2 \log n)$, resulting in a computation time of 
\[n^2 \log^2 n \, 2^{O(\log^* n)}.\]

Actually, because the lengths of coefficients are not smaller than the degree of the polynomials, no encoding of polynomials into numbers is required for this speed-up. In this case, one can do the polynomial multiplication in a polynomial ring over Fermat numbers as in Sch\"onhage and Strassen \cite{SchonhageS1971}. Then, during the Fourier transforms all multiplications are just shifts. Fast integer multiplication is only used for the multiplication of values. This results in the same asymptotic $n^2 \log^2 n \, 2^{O(\log^* n)}$ computation time with a better constant factor.

\section{Open Problems}
We have improved the time to compute the characteristic polynomial of a threshold graph from quadratic to almost linear (in the algebraic model). The question remains whether another factor of $\log n$ can be removed. More interesting is the question whether similarly efficient algorithms are possible for richer classes of graphs. Of particular interest are larger classes of graphs containing the threshold graphs, like cographs, graphs of clique-width 2, graphs of bounded clique-width, or even perfect graphs.

\end{document}